\pgfplotsset{plot coordinates/math parser=false} 
\newlength\figureheight 
\newlength\figurewidth 
\theoremstyle{plain}
\newtheorem{theorem}{Theorem}[section]
\theoremstyle{plain}
\newtheorem{proposition}[theorem]{Proposition}
\theoremstyle{plain}
\theoremstyle{definition}
\newtheorem{definition}[theorem]{Definition}
\theoremstyle{plain}
\theoremstyle{plain}
\newtheorem{corollary}[theorem]{Corollary}
\theoremstyle{plain}
\newtheorem{remark}[theorem]{Remark}
\newtheorem{assumption}[theorem]{Assumption}
\newcommand{\RR}{\mathbb{R}}
\newcommand{\mSetSymMat}[1]{S^{#1}}
\newcommand{\mSetPosSemSymMat}[1]{S^{#1}_+}
\newcommand{\mSetPosSymMat}[1]{S^{#1}_{++}}
\newcommand{\mIntInt}[2]{\mathcal I_{[#1, #2]}}
\newcommand{\mIntGeq}[1]{\mathcal I_{\geq #1}}
\newcommand{\mStateInputBound}{\mathcal W}
\newcommand{\mNormGen}[1]{\left\lVert {#1} \right\rVert}
\newcommand{\mDefFunction}[3]{#1: #2 \rightarrow #3}
\newcommand{\XX}{\mathbb{X}}
\newcommand{\UU}{\mathbb{U}}
\newcommand{\mSafeSet}{\mathcal{S}}
\newcommand{\mBackup}{\mathcal{B}}
\newcommand{\mConvexHull}[1]{\mathrm{co}(#1)}
\newcommand{\mDef}{\coloneqq}
\newcommand{\mTubeControl}{K_\Omega}
\newcommand{\mZpred}{{z}}
\newcommand{\mZpredOpt}{{z}^*}
\newcommand{\mVpred}{{v}}
\newcommand{\mVpredOpt}{{v}^*}
\newcommand{\mFeasibleSet}{\mathcal X_N}
\newcommand{\MM}{\mathcal M}
\newcommand{\LL}{\mathcal{L}}
\newcommand{\mCol}[1]{\mathrm {col}_#1}
\newcommand{\mRow}[1]{\mathrm {row}_#1}
\newif\ifmygrid@coordinates
\tikzset{/mygrid/step line/.style={line width=0.80pt,draw=gray!80},
	/mygrid/steplet line/.style={line width=0.25pt,draw=gray!80}}
\def\mygrid@def@coordinates(#1,#2)(#3,#4){%
	\def\mygrid@xlo{#1}%
	\def\mygrid@xhi{#3}%
	\def\mygrid@ylo{#2}%
	\def\mygrid@yhi{#4}%
}
\newcommand\DrawGrid[3][]{%
	\pgfkeys{/mygrid/.cd,coordinates=true,step=1,steplet=0.2,#1}%
	\draw[/mygrid/steplet line] #2 grid[step=\mygrid@steplet] #3;
	\draw[/mygrid/step line] #2 grid[step=\mygrid@step] #3;
	\mygrid@def@coordinates#2#3%
	\ifmygrid@coordinates%
		\draw[/mygrid/step line]
		\foreach \xpos in {\mygrid@xlo,...,\mygrid@xhi} {%
				(\xpos,\mygrid@ylo) -- ++(0,-3pt)
				node[anchor=north] {$\xpos$}
			}
		\foreach \ypos in {\mygrid@ylo,...,\mygrid@yhi} {%
				(\mygrid@xlo,\ypos) -- ++(-3pt,0)
				node[anchor=east] {$\ypos$}
			};
	\fi%
}
\title{\LARGE \bf
	Linear model predictive safety certification for learning-based control
}
\author{Kim P. Wabersich and Melanie N. Zeilinger
\thanks{Kim Wabersich ({\tt\small wabersich@kimpeter.de}) and Melanie N. Zeilinger
({\tt\small mzeilinger@ethz.ch}) are with tfhe Institute for Dynamic Systems and Control,
ETH Zurich, Switzerland. This work was supported by the Swiss National Science Foundation under grant no. PP00P2 157601/1.}
}
\newcommand\fs@betterruled{%
  \def\@fs@cfont{\bfseries}\let\@fs@capt\floatc@ruled
  \def\@fs@pre{\vspace*{5pt}\hrule height.8pt depth0pt \kern2pt}%
  \def\@fs@post{\kern2pt\hrule\relax}%
  \def\@fs@mid{\kern2pt\hrule\kern2pt}%
  \let\@fs@iftopcapt\iftrue}
\newcommand\copyrighttext{%
	\footnotesize \textbf{Published in: 2018 IEEE Conference on Decision and Control (CDC), DOI: 10.1109/CDC.2018.8619829.}\\
	\textcopyright 2018 IEEE. Personal use of this material is permitted. Permission from IEEE must be obtained for all other uses, in any current or
	future media, including reprinting/republishing this material for advertising or promotional purposes, creating new collective works,
	for resale or redistribution to servers or lists, or reuse of any copyrighted component of this work in other works.}
\newcommand\copyrightnotice{%
	\begin{tikzpicture}[remember picture,overlay]
		\node[anchor=south,yshift=3pt] at (current page.south) {\fbox{\parbox{\dimexpr\textwidth-\fboxsep-\fboxrule\relax}{\copyrighttext}}};
	\end{tikzpicture}%
}
\begin{document}

\maketitle
\copyrightnotice
\thispagestyle{empty}
\pagestyle{empty}

\begin{abstract}
\noindent
While it has been repeatedly shown that learning-based controllers can provide
superior performance, they often lack of safety guarantees.
This paper aims at addressing this problem by introducing a model predictive
safety certification (MPSC) scheme 
for linear systems with additive disturbances.
The scheme verifies safety of a proposed learning-based input and modifies
it as little as necessary in order to keep the system within a given set of
constraints. Safety is thereby related to the existence of a model predictive controller (MPC)
providing a feasible trajectory towards a safe target set. A robust MPC formulation
accounts for the fact that the model is generally uncertain in the context of learning, which
allows for proving constraint satisfaction at all times under the proposed MPSC strategy.
The MPSC scheme can be used in order to expand any
potentially conservative set of safe states and
we provide an iterative technique for enlarging the safe set.
Finally, a practical data-based design procedure for MPSC is
proposed using scenario optimization. 

\end{abstract}


\section{Introduction}

Learning-based control introduces new ways for controller synthesis
based on large-scale databases providing cumulated system knowledge.
This allows for previously intense tasks, such as system modeling and controller tuning,
to eventually be fully automated.
For example, deep reinforcement learning provides prominent results, with applications including
control of humanoid robots in complex environments \cite{merel2017humanBehavior}
and playing Atari Arcade video-games \cite{mnih2015atari}. 

Despite the advances in research-driven applications, the results can often not be transferred to
industrial systems that are \emph{safety-critical}, i.e. that must
be guaranteed to operate in a given range of physical and safety constraints. This is due
to the often complex functioning of learning-based methods rendering their systematic analysis difficult.

By introducing a model predictive safety certification (MPSC) mechanism
for any learning-based controller, we aim at bridging this gap for linear systems
with additive uncertainties that can, e.g., result from a belief representation of an unknown 
non-linear system. The proposed MPSC scheme estimates safety of a proposed
learning-based input in real-time by searching for a safe back-up trajectory
for the next time step in the form of generating a feasible trajectory towards a known safe set. Allowing the MPSC scheme
to modify the potentially unsafe learning-based input, if necessary, provides safety for all future times. 
The result can be seen as a `safety filter', since it only filters proposed inputs that drive
the system out of what we call the safe set. The resulting online optimization problem can
be efficiently solved in real-time using established model predictive (MPC) solvers. Partially unknown larger-scale
systems can therefore be efficiently enhanced with safety certificates during learning.

\emph{Contributions:}
We consider linear systems with additive disturbances, described in Section~\ref{sec:problem_description},
that encode the current, possibly data-driven, belief about a safety-critical system to which a potentially unsafe learning-based controller
should be applied. A model predictive safety certification scheme is proposed in Section III, which allows
for enhancing any learning-based controller with safety guarantees\footnote{Even human inputs can be enhanced
by the safety certification scheme, which relates e.g. to the concept of electronic stabilization control from
automotive engineering.}. The concept of the proposed scheme is comparable to the safety
frameworks presented in \cite{akametalu2014reachabilitySafety,wabersich2017scalableSafety} by providing an
implicit safe set together with a safe backup controller that can be applied if the system would
leave the safe set using the proposed learning input.
A distinctive advantage compared to existing methods is that the MPSC scheme can build on any system
behavior that is known to be safe, i.e. a known set of safe system states can be easily incorporated in our scheme
such that it will only analyze safety outside of the provided safe set.

The approach relies on scalable offline computations and online optimization of a robust MPC problem at every sampling time,
which can be performed using available real-time capable solvers that can deal with large-scale systems
(see e.g. \cite{domahidi2012efficientInteriorPoint}).
While we relate the required assumptions and design steps to tube-based MPC in Section
\ref{sec:calculation_omega_safe_set}, we present an automated, parametrization free, and data-driven design procedure,
that is tailored to the context of learning the system dynamics. The design
procedure and MPSC scheme are illustrated in Section~\ref{sec:numerical_examples} using numerical examples.

\emph{Related work:}
Making the relevant class of safety critical systems accessible to learning-based
control methods has gained significant attention in recent years. In addition to
the individual construction of safety certificates for specific learning-based
control methods subject to different notions of safety, see e.g. the survey
\cite{garcia2015aComprehensiveSurveySafeReinforcementLearning},
a discussion of advances in safe learning-based control
subject to state and input constraints can be found in \cite{fisac2017generalSafetyFramework}.
A promising direction that emerged from recent research focuses on what is called a `safety framework'
\cite{akametalu2014reachabilitySafety,fisac2017generalSafetyFramework,wabersich2017scalableSafety,larsen2017safeLearningDistributed},
which consists of a safe set in the state space and a safety controller.
While the system state is contained in the safe set, any feasible input
(including learning-based controllers) can be applied
to the system. However, if such an input would cause the system to leave the
safe set, the safety controller interferes. Since this strategy is compatible
with any learning-based control algorithm, it serves as a universal
safety certification concept.
The techniques proposed in \cite{akametalu2014reachabilitySafety,fisac2017generalSafetyFramework}
are based on a differential game formulation that results in solving
a min-max optimal control problem, which can provide the largest possible safe set,
but offers very limited scalability.
The approach described in \cite{wabersich2017scalableSafety} uses convex approximation
techniques that scale well to larger-scale systems at the cost
of a potentially conservative safe set. While these results explicitly consider
non-linear systems, we focus on linear model approximations allowing for various improvements.
We introduce a new mechanism for generating the safe set and controller using
ideas related to tube-based MPC, which enables scalability
with respect to the state dimension, while being less conservative than
e.g. \cite{wabersich2017scalableSafety}.

There is a methodological
similarity to learning-based MPC approaches, as e.g. proposed in
\cite{Aswani2013}, or more recently in \cite{Koller2018} considering
nonlinear Gaussian process models. While such methods are limited to an
MPC strategy based on the learned system model, this paper provides a concept
that can enhance any learning-based controller with safety guarantees.
This allows, e.g., for maximizing black-box reward functions
(reward of a sequence of actions is only available through measurements)
for complex tasks, see e.g. \cite{Mania2018}, which would not be
possible within an MPC framework, or to focus on exploration
in order to collect informative data about the system,
as described in Section~\ref{sec:numerical_examples}.

\emph{Notation:}
The set of symmetric matrices of dimension $n$
is $\mSetSymMat{n}$, the set of positive (semi-)
definite matrices is ($\mSetPosSemSymMat{n}$) $\mSetPosSymMat{n}$, 
the set of integers in the interval $[a,b]\subset\RR$ is
$\mIntInt{a}{b}$, and the set of integers in the interval
$[a,\infty)\subset\RR$ is $\mIntGeq{a}$.
The Minkowski sum of two sets $\mathcal A_1, \mathcal A_2 \subset \RR$
is denoted by $\mathcal A_1 \oplus \mathcal A_2$ and the
Pontryagin set difference by $\mathcal A_1 \ominus \mathcal A_2$.
The $i$-th row and $i$-th column of
a matrix $A\in\RR^{n\times m}$ is denoted by $\mRow{i}(A)$
and $\mCol{i}(A)$.


\section{Problem description}\label{sec:problem_description}
We consider dynamical systems, which can be described by
linear systems with additive disturbances of the form
\begin{align}\label{eq:linear_system_additive_disturbance}
	x(k+1) = Ax(k) + Bu(k) + w(k)
\end{align}
with initial condition $x(0)=x_0$ and $w(k)\in\mStateInputBound$ where $\mStateInputBound$
is a compact set.
The system is subject to polytopic state constraints $x(k) \in \XX
\mDef \lbrace x\in \RR^n|A_x x \leq b_x \rbrace$, 
$A_x\in\RR^{n_x\times n}$, $b_x\in\RR^{n_x}$ and polytopic input
constraints $u(k)\in\UU\mDef\lbrace u\in\RR^m|A_u u \leq b_u \rbrace$,
$A_u\in\RR^{n_u\times m}$, $b_u\in\RR^{n_u}$.
We assume that the origin is contained in $\XX$, $(A,B)$ is stabilizable,
and the system state is fully observable.
Note that system class \eqref{eq:linear_system_additive_disturbance} allows for modeling
nonlinear time-varying systems
	$x(k+1) = f(k,x(k),u(k))$ 
if $x(k+1) \in Ax(k) + Bu(k) \oplus \mStateInputBound$ for all $(x,u)\in (\XX,\UU)$.

We aim at providing a safety certificate for arbitrary control signals in terms of a
safe set and a safe control law. Given the system description \eqref{eq:linear_system_additive_disturbance}
and a potentially unsafe learning-based controller $u_\LL$,
we search for a set of states $\mSafeSet$ for which we know a feasible backup control strategy $u_\mBackup$
such that input and state constraints will be fulfilled for all future times. Therefore, 
$u_\LL$ can be applied as long as it does not cause the system to leave $\mSafeSet$ or 
violate input constraints.
Otherwise, a safety controller $u_\mSafeSet$ is allowed
to modify the learning input based on the backup controller in order to keep the system safe. 
Formally this is captured by the following definition of a safe set and controller.

\begin{definition}\label{def:safe_set}
	A set $\mSafeSet\subseteq \XX$ is called a \emph{safe set} for
	system \eqref{eq:linear_system_additive_disturbance} if a 
	\emph{safe backup control law} $\mDefFunction{u_\mBackup}{\RR^n\times\RR^m\times\mIntGeq{0}}{\UU}$
	is available such that for an arbitrary (learning-based) policy
	$\mDefFunction{u_\LL}{\mIntGeq{0}}{\RR^m}$, the application of the \emph{safety
	control law}
	\begin{align*}
			u_\mSafeSet(k)\mDef \begin{cases}
				u_\LL(k),  ~ \text{if}~u_\LL\in\UU \land \{ Ax + Bu_\LL \} \oplus \mStateInputBound \subseteq \mSafeSet\\
				u_\mBackup(x(k) ,u_\LL(k),k), ~ \text{otherwise} 
			\end{cases}
	\end{align*}
	guarantees that the system state $x(k)$ is contained in $\XX$ for
	all $k \geq \bar k$ if $x(\bar k)\in\mSafeSet$.
\end{definition}
While this framework is conceptually similar to those in 
\cite{akametalu2014reachabilitySafety,fisac2017generalSafetyFramework,wabersich2017scalableSafety,larsen2017safeLearningDistributed},
we do not require the safe set $\mSafeSet$ to be robust controlled invariant as in
\cite[Definition II.4]{wabersich2017scalableSafety},
\cite[Definition 2]{fisac2017generalSafetyFramework}, \cite[Section 2.2]{larsen2017safeLearningDistributed} or
\cite[Section II.A]{akametalu2014reachabilitySafety}.
The presented approach is thereby capable of enlarging any given safe set, and
can be combined with any of the previously proposed methods.



\section{Model predictive safety certification}\label{sec:mpsc}

\begin{figure}
	\centering
	\vspace{0.45cm}
	\begin{tikzpicture}[scale=0.6]
		\input{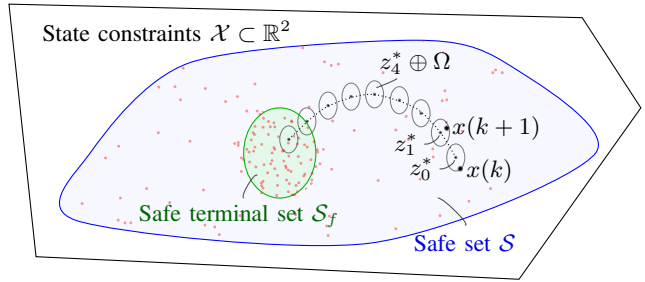}
	\end{tikzpicture}
	\caption{\small Model predictive safe set (blue) based on
	available data points (red): Belief-based safe optimal
	trajectory that supports $u_\LL(k)$, starting nearby $x(t)$ and resulting in the safe terminal set,
	with uncertainty tubes, that will contain the real system state.}
	\label{fig:mode_predictive_safety}
\end{figure}

The starting point for the derivation of the proposed safety concept is Definition \ref{def:safe_set}.
The essential requirement is that in the safe set $\mSafeSet$, we always need to know a feasible backup
controller $u_\mBackup$, that ensures constraint satisfaction in the face of uncertainty
for all future times. The idea for constructing such a controller is based on MPC
\cite{rawlings2009model}. Given the current system
state, we calculate a safe, finite horizon backup controller towards some conservative
target set $\mSafeSet_f$, which is known to be a safe set
and therefore provides `infinite safety' after applying the finite-time controller.

The concept is illustrated in Figure \ref{fig:mode_predictive_safety}.
Consider a current system state $x(k)$, together with a proposed learning
input $u_\LL(k)$. 
In order to analyze safety of $u_\LL(k)$, we test if $u_\LL(k)$
will lead us to a state $x(k+1)$, for which we can construct
a safe backup controller $u_\mBackup$ in the form of a feasible input sequence
that drives the system to the safe terminal set $\mSafeSet_f$ in a given finite number of steps.
If the test is successful, $x(k+1)$ is a safe state and $u_\LL(k)$ can be applied. 
At the next time step $k+1$, we repeat the calculations
for $x(k+1)$ and $u_\LL(k+1)$. If it is successful, we can again apply
the learning input $u_\LL(k+1)$, otherwise we can simply use the previously calculated
backup controller from time $k$. This strategy yields a safe set that is defined
by the feasible set of the corresponding optimization problem for planning a trajectory
towards the target set.

As the true system dynamics model is often unknown in the context of learning-based control,
we employ mechanisms from tube-based MPC to design a safe backup controller
for uncertain system dynamics of the form \eqref{eq:linear_system_additive_disturbance}.

\subsection{Model predictive safety certification scheme}

Similar as in tube-based MPC, see e.g. \cite{rawlings2009model}, a nominal backup trajectory is
computed, such that a stabilizing auxiliary controller is
able to track it for the real system within a `tube' towards the safe terminal set.
We first define the main components and assumptions of the tube-based MPC controller,
in order to then introduce the \emph{model predictive safety certification} (MPSC) scheme,
consisting of the MPSC problem and the proposed safety controller.
Define with $z(k)\in \RR^n$ and $v(k)\in\RR^m$ the nominal system
states and inputs, as well as the nominal dynamics
\begin{align}\label{eq:nominal_system}
	z(k+1) = Az(k) + Bv(k), ~k\in\mIntGeq{0}
\end{align}
with initial condition $z(0)=z_0$.
Denote $e(k)\mDef x(k)-z(k)$ as the error (deviation) between the system state
\eqref{eq:linear_system_additive_disturbance} and the nominal system state
\eqref{eq:nominal_system}. The controller is then defined by augmenting
the nominal input with an auxiliary feedback on the error, i.e.
\begin{align}\label{eq:auxiliary_control}
	u(k) = v(k) + \mTubeControl(x(k)-z(k)),
\end{align}
which keeps the real system state $x(k)$ close to the nominal system state $z(k)$
if $\mTubeControl\in\RR^{m\times n}$ is chosen such that it robustly stabilizes the
error $e(k) \mDef x(k) - z(k)$ with dynamics
\begin{align}\label{eq:error_dynamics}
	e(k+1) = (A + BK_\Omega) e(k) + w(k)
\end{align}
resulting from application of \eqref{eq:auxiliary_control} to the real system.

\begin{assumption}\label{ass:existence_RPI}
	There exists a linear state feedback matrix $\mTubeControl\in\RR^{m\times n}$
	that yields a stable error system \eqref{eq:error_dynamics}.
\end{assumption}

Stability of the autonomous error dynamics \eqref{eq:error_dynamics} implies the 
existence of a corresponding robust positively invariant set according to the
following definition.
\begin{definition}\label{def:RPI}
	A set $\Omega\subseteq\RR^n$ is a robust positively invariant (RPI) set
	for the error dynamics \eqref{eq:error_dynamics} if
	\begin{align*}
		e(k_0) \in \Omega \Rightarrow e(k) \in \Omega \text{ for all } k\in\mIntGeq{k_0}.
	\end{align*}
\end{definition}

In order to guarantee that $(x,u)\in \XX\times\UU$
under application of \eqref{eq:auxiliary_control}, the
state and input constraints $\XX$ and $\UU$ are tightened for the nominal system \eqref{eq:nominal_system},
as described e.g. in \cite{rawlings2009model}, to $\bar \XX = \XX \ominus \Omega$
and $\bar \UU = \UU \ominus K_\Omega\Omega$.
There exist various methods in the literature, which can be used in order to
calculate a controller and the corresponding RPI set according to
Definition \ref{def:RPI}, see e.g. \cite{rakovic2005invariant}.

Different from standard tube-based MPC, the model predictive safety certification (MPSC)
uses a terminal set that is only required to be itself a safe set
according to Definition~\ref{def:safe_set}, which is conceptually similar to the safe terminal
set used in \cite{Koller2018}. This allows not only for enlarging 
any potentially conservative initial safe set, but also for recursively improving 
the safe set, as will be shown in Section~\ref{sec:iterative_enlargement_of_Sf}.

\begin{assumption}\label{ass:terminal_set}
	There exists a safe set $\mathcal \mSafeSet_f\subseteq \XX$ and a safe control
	law $u_{\mSafeSet_f}$ according to Definition
	\ref{def:safe_set} such that $\Omega\subseteq\mSafeSet_f$.
\end{assumption}

\begin{remark}
	Note that a standard terminal set as, e.g. described in \cite{rawlings2009model}, also satisfies
	Assumption \ref{ass:terminal_set}. A trivial choice is therefore
	$\mathcal \mSafeSet_f = \Omega$ and $u_{\mSafeSet_f}(k) = \mTubeControl x(k)$.
\end{remark}

Based on these components, the proposed safe backup controller utilizes the following MPSC problem
for a given measured state $x$ and proposed learning input $u_\LL$:
\begin{subequations}\label{eq:MPSC_opt}
	\begin{align} \label{eq:MPSC_opt_cost}
			\min_{\substack{\mVpred_0,..,\mVpred_{N-1},\\\mZpred_0,..,\mZpred_{N}, \tilde u}} ~ & \mNormGen{u_\LL - \tilde u} \\ \label{eq:MPSC_opt_nominal_dynamics}
			\text{s.t.} ~~& \mZpred_{i+1} = A\mZpred_i + B\mVpred_i,~ \forall i\in\mIntInt{0}{N-1} \\ \label{eq:MPSC_opt_tightening}
						 & (\mZpred_i, \mVpred_i)\in \bar \XX \times \bar \UU,~ \forall i\in\mIntInt{0}{N-1} \\ \label{eq:MPSC_opt_terminal}
						 & \mZpred_N \in \mathcal \mSafeSet_f \ominus \Omega  \\ \label{eq:MPSC_safety_constraint}
						 & x - z_0  \in \Omega\\ \label{eq:MPSC_aux_vars_constraints}
						 & \tilde u  = \mVpred_0 + K_\Omega(x - \mZpred_0)
	\end{align}
\end{subequations}
where we denote the planning horizon by $N\in\mIntGeq{1}$ and the predicted nominal system states and
inputs by $\mZpred_i$ and $\mVpred_i$.
Let the feasible set of \eqref{eq:MPSC_opt} be denoted by
\begin{align}\label{eq:feasible_mpsF_set}
	\mFeasibleSet \mDef \{ x\in \RR^n |
		\eqref{eq:MPSC_opt_nominal_dynamics}-\eqref{eq:MPSC_aux_vars_constraints} \}\subseteq \XX.
\end{align}
Problem \eqref{eq:MPSC_opt} introduces the auxiliary variable $\tilde u$, which includes the auxiliary feedback
\eqref{eq:MPSC_aux_vars_constraints}, ensuring safety of the control input
$\tilde u$, as we will show in the proof of Theorem~\ref{thm:mpsf_safe}.
The cost \eqref{eq:MPSC_opt_cost} is chosen such that if possible, $\tilde u$ is equal to $u_\LL$,
in which case safety of $u_\LL$ is certified.
The controller resulting from \eqref{eq:MPSC_opt} in a receding horizon fashion is given by 
\begin{align}\label{eq:implicit_safe_control_law}
	\kappa(x) = \tilde u^*(x) 
\end{align}
where $\mVpred_0^*(x),..,\mVpred_{N-1}^*(x)$, $\mZpred_0^*(x),..,\mZpred_{N}^*(x)$, and $\tilde u^*(x)$
is the optimal solution of \eqref{eq:MPSC_opt} at state $x$.

It is important to note that \eqref{eq:MPSC_opt} may not be recursively feasible for general
safe sets $\mSafeSet_f$.
This is due to the fact that the terminal safe set $\mSafeSet_f$ is itself
not necessarily invariant or a subset of the feasible set.

To this end, we propose Algorithm \ref{alg:MPSC}, which implements a safety controller based on \eqref{eq:MPSC_opt}.
If $x(k)\in\mFeasibleSet$ we can always directly apply $\tilde u$ (Algorithm \ref{alg:MPSC}, line 4).
If at a subsequent time $x(k+1)\notin\mFeasibleSet$, then we know via \eqref{eq:MPSC_opt} a finite-time safe
backup controller towards $\mSafeSet_f$ using \eqref{eq:auxiliary_control}, from the trajectory computed to certify $x(k)$
(Algorithm \ref{alg:MPSC}, line 9), compare also with Figure \ref{fig:mode_predictive_safety}.
By Assumption \ref{ass:terminal_set} we can extend this finite-time backup controller after $N$-steps with $u_{\mSafeSet_f}$
(Algorithm \ref{alg:MPSC}, line 11) in order to obtain a safe backup controller for $x(k+i), i\geq 0$,
which will satisfy constraints at all times in the future.
In the case that $\mSafeSet_f\setminus\mFeasibleSet\neq\emptyset$, \eqref{eq:MPSC_opt} can be initially infeasible for
$x=x(0)\in\mSafeSet_f$. This case can be easily treated by directly applying $u_{\mSafeSet_f}$
(Algorithm \ref{alg:MPSC}, line 1 and line 11) which ensures
safety for all future times. Formalization of the above yields our main result.

\input{algorithms/MPSC.alg}
\begin{theorem}\label{thm:mpsf_safe}
	If Assumptions \ref{ass:existence_RPI} and \ref{ass:terminal_set} hold, then the control law
	resulting from Algorithm~\ref{alg:MPSC} is a safe backup controller and
	$\mFeasibleSet \cup \mSafeSet_f$ the corresponding safe set according to Definition \ref{def:safe_set}.
\end{theorem}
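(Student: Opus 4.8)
The plan is to verify Definition~\ref{def:safe_set} directly for the candidate safe set $\mFeasibleSet \cup \mSafeSet_f$, exhibiting the control law generated by Algorithm~\ref{alg:MPSC} as the required safe backup controller $u_\mBackup$. The whole argument rests on a single invariance property: whenever the tube input \eqref{eq:auxiliary_control} is applied along the stored optimal nominal trajectory $\mZpredOpt_0,\dots,\mZpredOpt_N$, the error $e(k) = x(k)-\mZpredOpt_{k_{\text{inf}}}$ stays inside $\Omega$. I would therefore first fix a time $\bar k$ with $x(\bar k)\in\mFeasibleSet\cup\mSafeSet_f$ and split the analysis into the feasible regime ($x(k)\in\mFeasibleSet$, where $\kappa$ from \eqref{eq:implicit_safe_control_law} is applied) and the infeasible regime (where the backup from the last feasible solve, or $u_{\mSafeSet_f}$, is applied).

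First I would treat the feasible case. If $x(k)\in\mFeasibleSet$, constraint \eqref{eq:MPSC_safety_constraint} gives $x(k)-\mZpredOpt_0\in\Omega$, so by \eqref{eq:MPSC_aux_vars_constraints} the applied input $\tilde u = \mVpredOpt_0 + \mTubeControl(x(k)-\mZpredOpt_0)$ lies in $\bar\UU\oplus\mTubeControl\Omega\subseteq\UU$ since $\mVpredOpt_0\in\bar\UU=\UU\ominus\mTubeControl\Omega$; hence $\kappa(x(k))\in\UU$ and $x(k)\in\mFeasibleSet\subseteq\XX$. Crucially, this input is exactly the tube feedback \eqref{eq:auxiliary_control} about $(\mZpredOpt_0,\mVpredOpt_0)$, so the error obeys \eqref{eq:error_dynamics} and the RPI property of $\Omega$ (Definition~\ref{def:RPI}, available by Assumption~\ref{ass:existence_RPI}) yields $x(k+1)-\mZpredOpt_1\in\Omega$. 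This establishes the base step of the invariance.

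Next I would propagate the invariance through the infeasible regime and read off constraint satisfaction. Let $k^*$ denote the most recent feasible solve; between $k^*$ and the current step every iteration has taken line~9 (any feasible step would have reset $k_{\text{inf}}$), and line~9 applies $\mVpredOpt_{k_{\text{inf}}}+\mTubeControl(x-\mZpredOpt_{k_{\text{inf}}})$, which is again the tube feedback \eqref{eq:auxiliary_control} about the shifted nominal pair. By induction on $k_{\text{inf}}$ using the RPI property, $e(k^*+i)=x(k^*+i)-\mZpredOpt_i\in\Omega$ for all $i\in\mIntInt{0}{N}$. Combining this with the tightened constraints \eqref{eq:MPSC_opt_tightening} ($\mZpredOpt_i\in\bar\XX=\XX\ominus\Omega$ and $\mVpredOpt_i\in\bar\UU$) gives $x(k^*+i)\in\XX$ and $u(k^*+i)\in\UU$ for $i\in\mIntInt{0}{N-1}$. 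At $i=N$ the terminal constraint \eqref{eq:MPSC_opt_terminal}, $\mZpredOpt_N\in\mSafeSet_f\ominus\Omega$, together with $e(k^*+N)\in\Omega$ forces $x(k^*+N)\in\mSafeSet_f$; this is precisely the step at which $k_{\text{inf}}=N>N-1$ triggers line~11, so the system enters $\mSafeSet_f$ exactly when Algorithm~\ref{alg:MPSC} hands over to $u_{\mSafeSet_f}$.

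Finally I would close the argument using the terminal set. Once $x\in\mSafeSet_f$ and $u_{\mSafeSet_f}$ is applied, Assumption~\ref{ass:terminal_set} guarantees, via Definition~\ref{def:safe_set} applied to the safe set $\mSafeSet_f$, that the state remains in $\XX$ for all subsequent times. The initialization $k_{\text{inf}}=N-1$ covers the remaining boundary case $x(0)\in\mSafeSet_f\setminus\mFeasibleSet$: the first solve is infeasible, line~7 yields $k_{\text{inf}}=N$, and line~11 applies $u_{\mSafeSet_f}$ immediately, which is safe since $x(0)\in\mSafeSet_f$. Chaining the feasible steps, the at most $N$ backup steps, and the terminal phase shows $x(k)\in\XX$ for all $k\geq\bar k$, so $\mFeasibleSet\cup\mSafeSet_f$ satisfies Definition~\ref{def:safe_set}. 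The main obstacle, and the reason this is not a standard recursive-feasibility proof, is exactly the remark preceding the theorem: \eqref{eq:MPSC_opt} need not be recursively feasible, so safety cannot be re-derived by re-solving at each step. Instead one must argue \emph{open-loop} safety of the single backup trajectory stored at the last feasible instant, which is what the RPI invariance together with the terminal safe-set handover accomplishes.
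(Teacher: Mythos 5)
Your proposal is correct and follows essentially the same route as the paper's proof: the same case split (feasible regime, open-loop tube propagation along the last stored optimal trajectory for $k_{\text{inf}}\in\mIntInt{1}{N-1}$, handover to $u_{\mSafeSet_f}$ at step $N$ via \eqref{eq:MPSC_opt_terminal}, and the boundary case $x(0)\in\mSafeSet_f\setminus\mFeasibleSet$ handled by the $k_{\text{inf}}=N-1$ initialization), with the RPI property of $\Omega$ doing the work in each phase. Your write-up is in fact marginally more explicit than the paper's on input-constraint satisfaction (the inclusion $\mVpredOpt_i + \mTubeControl e \in \bar\UU \oplus \mTubeControl\Omega \subseteq \UU$) and on the induction over $k_{\text{inf}}$, but these are elaborations of the same argument rather than a different one.
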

\begin{proof}
	If $x(0)\in\mSafeSet_f\setminus\mFeasibleSet$, the terminal safety controller $u_{\mSafeSet_f}$
	is applied since $k_\text{inf}$ is initialized to $N-1$ in (Algorithm~\ref{alg:MPSC}, line 1), which 
	keeps the system safe for all times.
	We show that $\mFeasibleSet$ is a safe set by first investigating the case that \eqref{eq:MPSC_opt}
	is feasible for all $k\in\mIntGeq{0}$ and extending the analysis to cases in which \eqref{eq:MPSC_opt}
	is infeasible for arbitrarily many time steps.

	Let $x(0)\in\mFeasibleSet$ and let \eqref{eq:MPSC_opt} be feasible
	for all $k\in\mIntGeq{0}$ (Algorithm~\ref{alg:MPSC}, line 4), i.e. $x(k) \in \mFeasibleSet$.
	Condition \eqref{eq:MPSC_safety_constraint} implies	by
	Assumption \ref{ass:existence_RPI} that $e(k+1)\in\Omega$ and therefore that $x(k+1)\in \mZpred_1 \oplus \Omega$,
	which implies by the tightened constraints on the nominal state \eqref{eq:MPSC_opt_tightening} that $x(k+1)\in\XX$.
	Therefore $\mFeasibleSet$ is a safe set under the safe backup controller $\kappa(x)$ in \eqref{eq:implicit_safe_control_law}.
	
	Now, consider an arbitrary time $\bar k$ for which $\eqref{eq:MPSC_opt}$ was feasible
	for the last time, i.e. $x(k)\notin \mFeasibleSet$ for all $k = \bar k + k_{\text{inf}}$ with
	$k_{\text{inf}}\in\mIntInt{1}{N-1}$.
	Because of \eqref{eq:MPSC_safety_constraint} and \eqref{eq:MPSC_aux_vars_constraints} we have
	that $e(\bar k+1) \in \Omega$ and therefore Assumption \ref{ass:terminal_set}
	together with \eqref{eq:MPSC_opt_terminal} allows for explicitly stating a safe backup control
	law based on $x=x(\bar k)\in\mFeasibleSet$ that keeps the system in the constraints for all future times:
	\begin{align*}
		&u_\mBackup(x(i),x,i) = \\                                                               
		&\quad \begin{cases}                                                                  
		\mVpred_{i-\bar k}^*(x) + \mTubeControl(x(i-\bar k) - \mZpred_{i-\bar k}^*(x))) ,  i\in\mIntInt{\bar k+1}{\bar k+N-1} \\
		u_{\mSafeSet_f}(i), i\in\mIntGeq{\bar k+N}.                                      
		\end{cases}                                                                    
	\end{align*}
	Since \eqref{eq:MPSC_opt} was feasible for
	$x(\bar k)$, the corresponding $\mVpredOpt_1,..,\mVpredOpt_{N-1}$, $\mZpredOpt_1,..,\mZpredOpt_{N}$ exist.
	Therefore in case of $x(k) \notin \mFeasibleSet$, Algorithm~\ref{alg:MPSC}, line 9
	and $k_{\text{inf}}\in\mIntInt{1}{N-1}$, it follows from 
	$e(\bar k + 1 )\in\Omega$ by Assumption \ref{ass:existence_RPI}
	that $x(\bar k + k_{\text{inf}})\in \mZpredOpt_{k_{\text{inf}}} \oplus \Omega$ for all $k_{\text{inf}}\in\mIntInt{1}{N-1}$.
	
	The last remaining case $k_{\text{inf}}\geq N$ follows from the observation, that
	$x(\bar k + N)\in \mSafeSet_f$ by \eqref{eq:MPSC_opt_terminal} for which we know the safe control law $u_{\mSafeSet_f}$
	for all future times by Assumption~\ref{ass:terminal_set}. Once a feasible solution is found again, the counter $k_\text{inf}$
	is set to zero.
	Consequently we investigated all possible cases in Algorithm~\ref{alg:MPSC} and proved that it will always
	provide a safe control input if $x(0)\in\mFeasibleSet \cup \mSafeSet_f$, showing the result.
\end{proof}




\subsection{A recursively feasible MPSC scheme}

By modifying Assumption \ref{ass:terminal_set} and requiring the terminal safe set to be invariant
for the nominal system, which is the standard assumption in tube-based MPC,
we obtain recursive feasibility of \eqref{eq:MPSC_opt} and can thus directly
apply the time-invariant control law \eqref{eq:implicit_safe_control_law} to
system \eqref{eq:linear_system_additive_disturbance} without the need of
Algorithm~\ref{alg:MPSC}. In other words, \eqref{eq:implicit_safe_control_law}
directly becomes the safety controller according to Definition \ref{def:safe_set}.

\begin{assumption}\label{ass:terminal_configuration_tube_mpc}
	There exists a set $\mathcal X_f\subseteq\bar \XX$ and 
	corresponding control law $\mDefFunction{\sigma_f}{\mathcal X_f}{\bar \UU}$
	such that for all $z\in \mathcal X_f$, $\Rightarrow A z + B\sigma_f(z) \in \mathcal X_f$.
\end{assumption}

\begin{theorem}\label{thm:mpsf_safe_recursive_feasibility}
	Let $\mSafeSet_f = \mathcal X_f \oplus \Omega$. If Assumptions \ref{ass:existence_RPI},
	and \ref{ass:terminal_configuration_tube_mpc} hold, 
	then \eqref{eq:implicit_safe_control_law} is a safe backup control law and
	$\mFeasibleSet$ the corresponding safe set according to Definition \ref{def:safe_set}.
	In addition, $\mFeasibleSet$ is a robust positively invariant set.
\end{theorem}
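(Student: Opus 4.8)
The plan is to prove the statement by the classical recursive-feasibility argument of tube-based MPC \cite{rawlings2009model}, specialised to problem \eqref{eq:MPSC_opt}. The whole result reduces to a single claim: $\mFeasibleSet$ is robust positively invariant under the closed loop \eqref{eq:implicit_safe_control_law}, i.e. $x\in\mFeasibleSet \Rightarrow x^+ \mDef Ax + B\kappa(x) + w \in \mFeasibleSet$ for every $w\in\mStateInputBound$. Once this is established, constraint satisfaction and the safe-set property of Definition~\ref{def:safe_set} follow with little extra work, since by construction $\mFeasibleSet\subseteq\XX$.

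As a preliminary I would record the set identity that makes the terminal ingredient usable. Since $\mSafeSet_f = \mathcal X_f \oplus \Omega$ with $\mathcal X_f$ closed convex and $\Omega$ compact, a standard support-function computation gives $\mSafeSet_f \ominus \Omega = (\mathcal X_f \oplus \Omega)\ominus \Omega = \mathcal X_f$, so the terminal constraint \eqref{eq:MPSC_opt_terminal} is exactly $\mZpred_N\in\mathcal X_f$, forcing the nominal terminal state into the nominally invariant set of Assumption~\ref{ass:terminal_configuration_tube_mpc}. For the invariance step, let $x\in\mFeasibleSet$ with optimiser $\mZpredOpt_0,\dots,\mZpredOpt_N$, $\mVpredOpt_0,\dots,\mVpredOpt_{N-1}$, and apply $\kappa(x)=\mVpredOpt_0 + \mTubeControl(x-\mZpredOpt_0)$. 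Writing the shifted nominal successor $z_0^+\mDef \mZpredOpt_1 = A\mZpredOpt_0 + B\mVpredOpt_0$, the error obeys $x^+ - z_0^+ = (A+B\mTubeControl)(x-\mZpredOpt_0) + w$; since \eqref{eq:MPSC_safety_constraint} gives $x-\mZpredOpt_0\in\Omega$ and $\Omega$ is RPI for \eqref{eq:error_dynamics} by Assumption~\ref{ass:existence_RPI}, we obtain $x^+ - z_0^+\in\Omega$, so \eqref{eq:MPSC_safety_constraint} holds at $x^+$.

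I would then propose the standard shifted candidate for \eqref{eq:MPSC_opt} at $x^+$: set $(\mZpred_i,\mVpred_i)\mDef(\mZpredOpt_{i+1},\mVpredOpt_{i+1})$ for $i\in\mIntInt{0}{N-2}$, and close the tail with $\mVpred_{N-1}\mDef\sigma_f(\mZpredOpt_N)$ and $\mZpred_N\mDef A\mZpredOpt_N + B\sigma_f(\mZpredOpt_N)$. Feasibility is checked constraint by constraint: the nominal dynamics \eqref{eq:MPSC_opt_nominal_dynamics} hold by construction; the tightened constraints \eqref{eq:MPSC_opt_tightening} for $i\leq N-2$ are inherited from the shifted indices, while the tail pair satisfies $(\mZpredOpt_N,\sigma_f(\mZpredOpt_N))\in\mathcal X_f\times\bar\UU\subseteq\bar\XX\times\bar\UU$ by Assumption~\ref{ass:terminal_configuration_tube_mpc}; the terminal constraint $\mZpred_N\in\mathcal X_f$ follows from the nominal invariance $A\mZpredOpt_N+B\sigma_f(\mZpredOpt_N)\in\mathcal X_f$; \eqref{eq:MPSC_safety_constraint} is the inclusion $x^+-z_0^+\in\Omega$ shown above; and \eqref{eq:MPSC_aux_vars_constraints} merely defines the auxiliary variable and is automatically satisfiable. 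Hence $x^+\in\mFeasibleSet$, proving $\mFeasibleSet$ is robust positively invariant under \eqref{eq:implicit_safe_control_law}.

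It remains to connect this with Definition~\ref{def:safe_set}. Whenever $x\in\mFeasibleSet$, feasibility yields $x\in\mZpredOpt_0\oplus\Omega\subseteq\bar\XX\oplus\Omega\subseteq\XX$ and $\kappa(x)\in\bar\UU\oplus\mTubeControl\Omega\subseteq\UU$, so $(x,\kappa(x))\in\XX\times\UU$. Taking $u_\mBackup=\kappa$ and $\mSafeSet=\mFeasibleSet$, I would verify the two branches of $u_\mSafeSet$: if $u_\LL$ is applied, the branch condition $\{Ax+Bu_\LL\}\oplus\mStateInputBound\subseteq\mFeasibleSet$ gives $x^+\in\mFeasibleSet$; if $u_\mBackup=\kappa$ is applied, $x^+\in\mFeasibleSet$ by the invariance just proved. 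In either case $x^+\in\mFeasibleSet\subseteq\XX$, so induction from $x(\bar k)\in\mFeasibleSet$ keeps $x(k)\in\XX$ for all $k\geq\bar k$, which is precisely the safe-set property. I expect the main obstacle to be the tail of the candidate construction: one must translate \eqref{eq:MPSC_opt_terminal} into membership in $\mathcal X_f$ (this is where the Pontryagin identity, hence convexity of $\mathcal X_f$, is used) and then re-establish \eqref{eq:MPSC_opt_terminal} at $x^+$ via the nominal invariance of Assumption~\ref{ass:terminal_configuration_tube_mpc}; the remaining constraints are routine index shifts, with the disturbance absorbed entirely through the RPI property of $\Omega$.
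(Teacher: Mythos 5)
Your proposal is correct and follows essentially the same route as the paper: the paper's proof of Theorem~\ref{thm:mpsf_safe_recursive_feasibility} is precisely the standard tube-based-MPC shifted-candidate induction (stated there in three lines with the key step $x(k+1)\in \mZpredOpt_1\oplus\Omega$ and a citation to \cite{rawlings2009model}, then reducing safety to the argument of Theorem~\ref{thm:mpsf_safe}), and you simply execute that induction in full detail. Your explicit observation that translating \eqref{eq:MPSC_opt_terminal} into $\mZpredOpt_N\in\mathcal X_f$ rests on the identity $(\mathcal X_f\oplus\Omega)\ominus\Omega=\mathcal X_f$, which needs $\mathcal X_f$ closed and convex (not literally stated in Assumption~\ref{ass:terminal_configuration_tube_mpc}), is a point the paper's own appeal to the standard argument leaves implicit, so if anything you are more careful than the original proof.
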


\begin{proof}
	We begin with showing recursive feasibility under \eqref{eq:implicit_safe_control_law}.
	Let \eqref{eq:MPSC_opt} be feasible at time $k$. It follows that $x(k+1)\in \mZpredOpt_1 \oplus \Omega$
	because of \eqref{eq:MPSC_safety_constraint} and \eqref{eq:MPSC_aux_vars_constraints}. From here, recursive feasibility
	follows as in standard tube-based MPC by induction, see e.g. \cite{rawlings2009model}.
	Along the lines of the proof of Theorem~\ref{thm:mpsf_safe}, recursive
	feasibility implies that $\mFeasibleSet$ is a safe set.
\end{proof}



\section{Design of $\mSafeSet_f$ and $\Omega$ from data}\label{sec:calculation_omega_safe_set}

The proposed MPSC scheme is based on two main design components, the robust positively
invariant set $\Omega$, which determines the tube, and the terminal safe set $\mSafeSet_f$. 

While $\mSafeSet_f$ can be chosen more generally according to Assumption \ref{ass:terminal_set}, we note from
Theorem~\ref{thm:mpsf_safe_recursive_feasibility} that we can in principle also use
the same design methods proposed for linear tube-based MPC.
The computation of the robust invariant set $\Omega$ and
the nominal terminal set $\mathcal X_f$ have been widely studied in the literature, see e.g.
\cite{rawlings2009model,blanchini1999setInvariance}
and references therein.

This section presents a different option for the approximation of a tube and safe terminal set that is 
tailored to the learning context and aims at a minimal amount of tuning `by hand'. We propose
to infer a robust control invariant set $\Omega$ either directly from data or
from a probabilistic model via scenario-based optimization. Secondly,
starting from any terminal safe set, e.g. the trivial choice $\{0\}\oplus\Omega$, we show how to enlarge
this terminal set iteratively by utilizing feasible solutions of \eqref{eq:MPSC_opt}
over time.

\subsection{Scenario based calculation of $\Omega$ from data}\label{sec:computation_omega}

Let $\{\tilde w_i\}_{i=1}^{N_s}$ be a set of so-called `scenarios',
either sampled from a probabilistic belief about the system
dynamics \eqref{eq:linear_system_additive_disturbance}
or collected from measurements. We restrict ourselves to ellipsoidal robust positively invariant
sets $\Omega = \{ x|x^\top P x \leq 1 \}$ with $P\in\mSetPosSymMat{n}$, in order to enable scalability of the
resulting design optimization problems to larger scale systems. The corresponding robust scenario-based
design problem for computation of the set $\Omega$ is given by
\begin{subequations}\label{eq:scenario_based_omega_opt}
\begin{align}\label{eq:scenario_based_omega_opt_objective}
	& \min_{P\in\mSetPosSymMat{n},\tau >0}  ~ -\log \det (P) \\\nonumber
	& \text{s.t.}  ~ \forall i\in \mIntInt{1}{N_s}: \\\label{eq:scenario_based_omega_opt_lmi}
		~ & ~\begin{pmatrix}
			A_{cl}^\top P A_{cl} - \tau P & A_{cl}^\top P \tilde w_i \\
			\tilde w_i^\top P A_{cl} & \tilde w_i^\top P \tilde w_i + \tau - 1
		\end{pmatrix} \preceq 0
\end{align}
\end{subequations}
where $A_{cl} \mDef A + B \mTubeControl$.	
Problem \eqref{eq:scenario_based_omega_opt} defines a robust
positively invariant set for the error system \eqref{eq:error_dynamics},
if the condition is enforced for all $\tilde w_i \in\mStateInputBound$, see
e.g. \cite{blanchini1999setInvariance}.
The objective \eqref{eq:scenario_based_omega_opt_objective} is chosen such
that a possibly small RPI set is obtained, which increases by definition
of $\bar\XX$ and $\bar \UU$ the size of the feasible region of \eqref{eq:MPSC_opt},
and therefore the size of the safe set. A stabilizing linear state feedback
matrix $K_\Omega$ according to Assumption \ref{ass:existence_RPI} needs to
be chosen beforehand, e.g. using LQR or $\mathcal H_\infty$ controller
design methods. 
\begin{proposition}\label{prop:scenario_based_omega}
	Consider system \eqref{eq:error_dynamics}
	and let $n_s \mDef (n^2+n)/2 + 1$.
	If \eqref{eq:scenario_based_omega_opt} attains a solution,
	then with probability at least
	$1 - \sum_{i=0}^{n_s-1}\begin{psmallmatrix} N_s \\i \end{psmallmatrix}
		\epsilon^i(1-\epsilon)^{N_s-i}$,
	the solution is $\epsilon$-level robustly feasible for the
	corresponding robust problem imposing \eqref{eq:scenario_based_omega_opt_lmi}
	for all $\tilde w_i \in \mStateInputBound$, i.e., the probability that there exists a
	$\tilde w_i \in \mStateInputBound$ for which \eqref{eq:scenario_based_omega_opt_lmi}
	is violated is less or equal to $\epsilon$.
\end{proposition}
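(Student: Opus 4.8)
The plan is to read \eqref{eq:scenario_based_omega_opt} as a \emph{scenario program}: the decision variable is the pair $\theta \mDef (P,\tau)$, the samples $\tilde w_1,\dots,\tilde w_{N_s}$ are the scenarios, and the assertion to be proved is precisely the a-priori feasibility guarantee of the scenario optimization theorem of Calafiore--Campi. Accordingly, the work lies in matching the hypotheses of that theorem and in correctly counting the scalar decision variables, since that count is exactly what enters the tail-sum bound.

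First I would fix the probabilistic framework. The scenarios $\{\tilde w_i\}$ are i.i.d.\ draws of the disturbance (from the belief distribution, or empirically, supported on $\mStateInputBound$); for a candidate $\theta$ I would define its violation probability $V(\theta)$ as the probability that a fresh draw $\tilde w\in\mStateInputBound$ makes \eqref{eq:scenario_based_omega_opt_lmi} fail, so that ``$\epsilon$-level robustly feasible'' is exactly $V(\theta)\le\epsilon$. I would then count the free scalar variables: $P\in\mSetPosSymMat{n}$ is symmetric and contributes $n(n+1)/2=(n^2+n)/2$ independent entries, while $\tau$ contributes one more, for a total of exactly $n_s=(n^2+n)/2+1$. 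This integer plays the role of the optimization dimension $d$ in the scenario bound, which explains why the summation is truncated at $n_s-1$.

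With this in place the proof becomes a direct invocation of the scenario theorem: assuming the optimizer $\theta^*=(P^*,\tau^*)$ exists and is unique (imposing a tie-breaking rule if necessary), the probability \emph{over the draw of the $N_s$ scenarios} that the returned solution has $V(\theta^*)>\epsilon$ is at most $\sum_{i=0}^{n_s-1}\binom{N_s}{i}\epsilon^i(1-\epsilon)^{N_s-i}$. Taking the complementary event yields the claim: with probability at least $1-\sum_{i=0}^{n_s-1}\binom{N_s}{i}\epsilon^i(1-\epsilon)^{N_s-i}$ one has $V(\theta^*)\le\epsilon$, i.e.\ the LMI \eqref{eq:scenario_based_omega_opt_lmi} can fail for at most an $\epsilon$-measure of $\tilde w\in\mStateInputBound$.

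The main obstacle is that the scenario constraint is \emph{not} jointly convex in $\theta$: the block $A_{cl}^\top P A_{cl}-\tau P$ is bilinear in $(\tau,P)$ (it is the S-procedure multiplier $\tau$ multiplying the ellipsoid form $P$), so each per-scenario feasible set is nonconvex and the classical convex scenario theorem does not apply verbatim. I would close this gap by appealing to the general scenario theory for nonconvex programs, in which the same tail-sum bound holds once the number of support constraints of the attained solution does not exceed the dimension $n_s$; the delicate point is therefore to justify, under a non-degeneracy assumption, that the support-constraint count is controlled by the number of free decision variables. A fully convex alternative is to fix $\tau$ and solve the resulting LMI in $P$ alone (which is affine, hence convex), but this lowers the count to $(n^2+n)/2$ and would not reproduce the stated $n_s$; reconciling these two viewpoints is the crux one must address.
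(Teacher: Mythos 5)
Your proposal is, in substance, the paper's own proof: the paper disposes of the proposition in a single line, citing \cite[Theorem 1]{campi2008scenario} ``similar to the application demonstrated in \cite{calafiore2006scenario}'', which is exactly your invocation of the scenario bound with decision variable $\theta=(P,\tau)$, violation probability $V(\theta)\leq\epsilon$ as the meaning of $\epsilon$-level robust feasibility, and the count $n_s=(n^2+n)/2+1$ (the $n(n+1)/2$ free entries of the symmetric $P$ plus the scalar $\tau$) playing the role of the dimension $d$ in the tail sum. Where you go beyond the paper is the convexity caveat, and you are right on the mathematics: the term $\tau P$ makes \eqref{eq:scenario_based_omega_opt_lmi} a bilinear matrix inequality in $(P,\tau)$, so the cited theorem --- which assumes convexity in the decision variables (plus existence and uniqueness, the latter fixable by a tie-break rule, as you note) --- does not apply verbatim to the joint program; the paper's proof is silent on this point. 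Two remarks on your proposed repairs. First, your ``fully convex alternative'' of fixing $\tau$ is less of a dead end than you suggest: the Campi--Garatti tail bound $\sum_{i=0}^{d-1}\binom{N_s}{i}\epsilon^i(1-\epsilon)^{N_s-i}$ is monotonically increasing in $d$, so a guarantee obtained with $d=(n^2+n)/2$ for the LMI in $P$ alone \emph{implies} the weaker bound stated with $n_s$; the stated $n_s$ need not be reproduced, only dominated. Second, the residual discrepancy is then only that \eqref{eq:scenario_based_omega_opt} as written optimizes over $\tau$ jointly, so a fully rigorous argument either restricts $\tau$ to a fixed value or a finite grid (paying a union bound over grid points), or, as you say, appeals to nonconvex scenario theory, where bounding the number of support constraints by the number of variables is not automatic and typically yields only a posteriori guarantees. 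In short, your core argument coincides with the paper's, and the crux you flag is a genuine gap in the paper's one-line proof rather than a defect of your approach.
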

\begin{proof}
	The result follows directly from \cite[Theorem 1]{campi2008scenario} similar
	to the application demonstrated in \cite{calafiore2006scenario}.
\end{proof}


\subsection{Iterative enlargement of the terminal safe set $\mSafeSet_f$}\label{sec:iterative_enlargement_of_Sf}

In this section we show how to enlarge the terminal safe set $\mSafeSet_f$ based on previously
calculated solutions of \eqref{eq:MPSC_opt}, which is conceptually similar to the data-based
terminal set proposed in \cite{rosolia2017learning}. Note, that a larger terminal set $\mSafeSet_f$
according to Assumption \ref{ass:terminal_set} or Assumption \ref{ass:terminal_configuration_tube_mpc}
typically also leads to a larger feasible set $\mFeasibleSet$, and therefore to a larger
overall safe set $\mSafeSet$ according to Theorems \ref{thm:mpsf_safe} and
\ref{thm:mpsf_safe_recursive_feasibility}.

The main idea is to define a safe set based on successfully
solved instances of \eqref{eq:MPSC_opt} for measured system states
\begin{align}\label{eq:tuple_of_feasible_trajectories}
	\bm x^*_{\MM(k)} \mDef \left\{ x(i), i\in \MM(k) \right\}
\end{align}
where $\MM(k) \mDef \{ i \in\mIntInt{0}{k} | x(i) \in \mFeasibleSet \}$
is an index set representing time instances for which the system state $x(i)$
was feasible in terms of \eqref{eq:MPSC_opt} during
application of Algorithm~\ref{alg:MPSC} up to time $k$.

\begin{theorem}\label{thm:enlargement_of_terminal_set}
	If Assumptions \ref{ass:existence_RPI} and \ref{ass:terminal_set}
	are satisfied, and
	\eqref{eq:MPSC_opt} is convex, then the set
	\begin{align}\label{eq:enlargement_of_terminal_set}
		\mSafeSet_f^{\MM(k)} \mDef \mConvexHull{\bm x^*_{\MM(k)}} \cup \mSafeSet_f
	\end{align}
	is again a safe set according to Definition \ref{def:safe_set}
	with a safe backup controller given by Algorithm~\ref{alg:MPSC}.
\end{theorem}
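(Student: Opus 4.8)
The plan is to show that $\mSafeSet_f^{\MM(k)}$ is a safe set by exhibiting a safe backup control law for every state in it, invoking Definition~\ref{def:safe_set}. Since $\mSafeSet_f^{\MM(k)} = \mConvexHull{\bm x^*_{\MM(k)}} \cup \mSafeSet_f$, I would treat the two parts of the union. On $\mSafeSet_f$ the claim is immediate: by Assumption~\ref{ass:terminal_set} the set $\mSafeSet_f$ is itself a safe set with its own safe control law $u_{\mSafeSet_f}$, so nothing new is needed there. The work is to show that every point of the convex hull $\mConvexHull{\bm x^*_{\MM(k)}}$ admits a feasible backup trajectory towards $\mSafeSet_f$, i.e. that it lies in (or can be safely steered from within) $\mFeasibleSet$, so that Algorithm~\ref{alg:MPSC} together with Theorem~\ref{thm:mpsf_safe} provides the required safe backup controller.

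First I would recall that each stored state $x(i)$ with $i\in\MM(k)$ was, by construction of $\MM(k)$, feasible for \eqref{eq:MPSC_opt}; hence each such $x(i)\in\mFeasibleSet$. The key structural fact to exploit is that \eqref{eq:MPSC_opt} is convex (this is the extra hypothesis), and that its feasible set $\mFeasibleSet$ defined in \eqref{eq:feasible_mpsF_set} is therefore a convex set: the nominal dynamics \eqref{eq:MPSC_opt_nominal_dynamics} are linear equalities, the tightened constraints \eqref{eq:MPSC_opt_tightening}, the terminal constraint \eqref{eq:MPSC_opt_terminal}, and the tube membership \eqref{eq:MPSC_safety_constraint} are all constraints on a jointly convex feasible region in the decision and parameter variables, and projecting a convex set onto the $x$-coordinate yields a convex set. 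Consequently $\mConvexHull{\bm x^*_{\MM(k)}}\subseteq \mFeasibleSet$, since the convex hull of finitely many points of a convex set is contained in that set.

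With the inclusion $\mConvexHull{\bm x^*_{\MM(k)}}\subseteq \mFeasibleSet$ in hand, the remainder follows directly from Theorem~\ref{thm:mpsf_safe}: that theorem establishes that $\mFeasibleSet\cup\mSafeSet_f$ is a safe set with safe backup controller given by Algorithm~\ref{alg:MPSC}. Since
\[
	\mSafeSet_f^{\MM(k)} = \mConvexHull{\bm x^*_{\MM(k)}}\cup\mSafeSet_f \subseteq \mFeasibleSet\cup\mSafeSet_f,
\]
every state in $\mSafeSet_f^{\MM(k)}$ lies in the already-certified safe set $\mFeasibleSet\cup\mSafeSet_f$, and Definition~\ref{def:safe_set} is a property that is inherited by subsets (a safe backup law for the larger set restricts to one for the smaller set, and constraint satisfaction from any starting state in the subset is guaranteed because it was guaranteed from that state in the superset). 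Hence Algorithm~\ref{alg:MPSC} serves as the safe backup controller for $\mSafeSet_f^{\MM(k)}$ as well, proving the claim.

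The main obstacle I anticipate is making rigorous the step that convexity of \eqref{eq:MPSC_opt} implies convexity of the projected feasible set $\mFeasibleSet$, and in particular that the stored feasible states can be simultaneously interpolated: given feasibility witnesses (nominal trajectories $\mZpred_i,\mVpred_i$) for two endpoints $x(i_1),x(i_2)$, one must check that a convex combination of those witnesses is a valid witness for the corresponding convex combination of the states, which requires the constraint $x-\mZpred_0\in\Omega$ and all tightened/terminal constraints to be preserved under convex combination. This is exactly where ellipsoidal (convex) $\Omega$ and the polytopic, hence convex, tightened sets $\bar\XX,\bar\UU,\mSafeSet_f\ominus\Omega$ matter, and I would state explicitly that convexity of \eqref{eq:MPSC_opt} (as assumed) encapsulates this, so that the interpolated witness certifies the interpolated state. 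Once convexity of $\mFeasibleSet$ is secured, the rest is a short set-inclusion argument invoking Theorem~\ref{thm:mpsf_safe}.
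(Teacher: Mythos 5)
Your proposal is correct, and its core coincides with the paper's proof: both hinge on the same key step, namely that convexity of \eqref{eq:MPSC_opt} makes the feasible set $\mFeasibleSet$ convex, whence $\mConvexHull{\bm x^*_{\MM(k)}}\subseteq\mFeasibleSet$ since each stored $x(i)$, $i\in\MM(k)$, lies in $\mFeasibleSet$ by construction. You diverge only in the endgame. The paper replays the proof of Theorem~\ref{thm:mpsf_safe} with $\mFeasibleSet$ replaced by $\mConvexHull{\bm x^*_{\MM(k)}}$ and then invokes closure of safe sets under unions; you instead use the \emph{statement} of Theorem~\ref{thm:mpsf_safe} as a black box and argue that safety per Definition~\ref{def:safe_set} is inherited by subsets, since $\mSafeSet_f^{\MM(k)}\subseteq\mFeasibleSet\cup\mSafeSet_f$. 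Your route is tidier in that it never reopens the earlier proof, but the monotonicity step deserves one more line of care: Definition~\ref{def:safe_set} is self-referential, because the switching condition $\{Ax+Bu_\LL\}\oplus\mStateInputBound\subseteq\mSafeSet$ tests membership in the candidate set itself, so shrinking the set changes \emph{when} the backup is triggered, and the superset's guarantee only covers trajectories of the superset's own closed loop. The fix is not, as your parenthetical suggests, to ``restrict'' the backup law of the larger set, but to take as the subset's backup the \emph{entire} safety control law of the superset: whenever the subset's (stricter) condition fails, outputting the superset's safety-law value makes the two closed loops coincide exactly (if the subset's condition holds, so does the superset's, and both apply $u_\LL$), so safety follows because the initial state lies in the superset. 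With that one-line repair your argument is airtight; and, to be fair, the paper's own appeal to union-closure of safe sets is asserted without proof and faces the same self-referentiality subtlety, so the two proofs sit at a comparable level of rigor.
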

\begin{proof}
	If \eqref{eq:MPSC_opt} is convex, then the feasible set is a convex set, see e.g. \cite{boyd2004convex}
	and therefore $\mConvexHull{\bm x^*_{\MM(k)}}\subseteq\mFeasibleSet$.
	As a consequence, we can solve \eqref{eq:MPSC_opt} for all $x\in\mConvexHull{\bm x^*_{\MM(k)}}$ which
	in turn provides the result	by the proof of Theorem~\ref{thm:mpsf_safe} (by replacing $\mFeasibleSet$ with
	$\mConvexHull{\bm x^*_{\MM(k)}}$) and the fact that the union of two safe sets is again a
	safe set.
\end{proof}

If $\mSafeSet_f \subseteq \mConvexHull{\bm x^*_{\MM(k)}}$, convexity of the new terminal set
\eqref{eq:enlargement_of_terminal_set} is ensured and
we can iteratively enlarge the initial terminal set $\mSafeSet_f$.

\begin{remark}\label{rem:practical_design_procedure}
	A practical design procedure in order to determine $\Omega$ and $\mSafeSet_f$ is as follows.
	Compute $\Omega$ based on measurements as described in
	Section~\ref{sec:computation_omega} and initialize $\mSafeSet_f = \{\Omega\}$. Then, during
	closed-loop operation of Algorithm~\ref{alg:MPSC} enlarge $\mSafeSet_f$ according to
	\eqref{eq:enlargement_of_terminal_set}.
\end{remark}

In order to provide a similar result with respect to Theorem~\ref{thm:mpsf_safe_recursive_feasibility}
consider the set $\bm z^*_{\MM(k)} = \{\mZpredOpt_1(x(i)),..,\mZpredOpt_N(x(i)), i\in\MM(k)\}$ with $\MM(k)$
as defined above.

\begin{corollary}\label{cor:enlargement_of_terminal_set_tube}
	If Assumptions \ref{ass:existence_RPI} and
	\ref{ass:terminal_configuration_tube_mpc} are satisfied, $\mSafeSet_f = \mathcal X_f\oplus\Omega$, and
	\eqref{eq:MPSC_opt} is convex, then the set
	\begin{align}\label{eq:enlargement_of_terminal_set_tube}
		\mathcal X_f^{\MM(k)} \mDef \mConvexHull{\bm z^*_{\MM(k)}} \cup \mathcal X_f 
	\end{align}
	satisfies Assumption \ref{ass:terminal_configuration_tube_mpc} and 
	$\mathcal X_f^{\MM(k)}\oplus\Omega$ is a safe set according to Definition
	\ref{def:safe_set} with safe backup controller \eqref{eq:implicit_safe_control_law}.
\end{corollary}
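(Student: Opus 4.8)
The plan is to establish the result by verifying that the enlarged nominal terminal set $\mathcal X_f^{\MM(k)}$ itself satisfies Assumption \ref{ass:terminal_configuration_tube_mpc}, after which the claim follows directly from Theorem~\ref{thm:mpsf_safe_recursive_feasibility} applied with the terminal safe set $\mSafeSet_f = \mathcal X_f^{\MM(k)} \oplus \Omega$. This mirrors the structure of the proof of Theorem~\ref{thm:enlargement_of_terminal_set}, replacing the feasible-set invariance argument for $\mFeasibleSet$ by a nominal control-invariance argument for $\mathcal X_f^{\MM(k)}$. Once Assumption~\ref{ass:terminal_configuration_tube_mpc} is in force for $\mathcal X_f^{\MM(k)}$, Theorem~\ref{thm:mpsf_safe_recursive_feasibility} makes \eqref{eq:implicit_safe_control_law} a safe backup control law whose feasible set is a safe set; since the terminal set $\mathcal X_f^{\MM(k)} \oplus \Omega$ is contained in that feasible set and any subset of a safe set is again safe in the sense of Definition~\ref{def:safe_set}, this yields the assertion. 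Thus the entire burden is to produce a control law $\mathcal X_f^{\MM(k)} \to \bar\UU$ keeping the nominal state inside $\mathcal X_f^{\MM(k)} \subseteq \bar\XX$.

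First I would verify containment and invariance on the finite generating set $\bm z^*_{\MM(k)}$. For each $i \in \MM(k)$ and $j \in \mIntInt{1}{N-1}$, constraint \eqref{eq:MPSC_opt_tightening} gives $\mZpredOpt_j(x(i)) \in \bar\XX$ and $\mVpredOpt_j(x(i)) \in \bar\UU$, while the nominal dynamics \eqref{eq:MPSC_opt_nominal_dynamics} yield $A\mZpredOpt_j(x(i)) + B\mVpredOpt_j(x(i)) = \mZpredOpt_{j+1}(x(i)) \in \bm z^*_{\MM(k)}$, so the input $\mVpredOpt_j(x(i))$ maps each such generator to the next generator. For the terminal generator $j=N$, the terminal constraint \eqref{eq:MPSC_opt_terminal} places $\mZpredOpt_N(x(i)) \in \mSafeSet_f \ominus \Omega = (\mathcal X_f \oplus \Omega)\ominus\Omega$; since convexity of \eqref{eq:MPSC_opt} forces $\mathcal X_f$ to be convex, the cancellation identity $(\mathcal X_f \oplus \Omega)\ominus\Omega = \mathcal X_f$ for convex sets shows $\mZpredOpt_N(x(i)) \in \mathcal X_f$, so the law $\sigma_f$ of Assumption~\ref{ass:terminal_configuration_tube_mpc} applies and maps it into $\mathcal X_f \subseteq \mathcal X_f^{\MM(k)}$. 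Combined with $\mathcal X_f \subseteq \bar\XX$, this establishes $\mathcal X_f^{\MM(k)} \subseteq \bar\XX$ and renders every generator invariant with inputs in $\bar\UU$.

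It then remains to extend this from the generators to all of $\mConvexHull{\bm z^*_{\MM(k)}}$, which I would do exactly as in tube-based MPC: writing an arbitrary $z \in \mConvexHull{\bm z^*_{\MM(k)}}$ as a convex combination of generators and assigning the matching convex combination of their inputs, linearity of the nominal dynamics makes the successor the same convex combination of the individual successors, and convexity of $\bar\UU$ keeps the input admissible. The main obstacle is that this successor is a convex combination of points lying in $\mConvexHull{\bm z^*_{\MM(k)}}$ and in $\mathcal X_f$, so it is guaranteed to return to $\mathcal X_f^{\MM(k)}$ only if $\mathcal X_f^{\MM(k)}$ is convex; as already noted for Theorem~\ref{thm:enlargement_of_terminal_set}, this is ensured when $\mathcal X_f \subseteq \mConvexHull{\bm z^*_{\MM(k)}}$, in which case $\mathcal X_f^{\MM(k)} = \mConvexHull{\bm z^*_{\MM(k)}}$ and all successors remain in this convex set. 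Finishing the verification of Assumption~\ref{ass:terminal_configuration_tube_mpc} for $\mathcal X_f^{\MM(k)}$ by combining the invariant law on the convex hull with $\sigma_f$ on $\mathcal X_f$ (the union of two sets satisfying Assumption~\ref{ass:terminal_configuration_tube_mpc} again satisfies it), and invoking Theorem~\ref{thm:mpsf_safe_recursive_feasibility}, then gives the corollary.
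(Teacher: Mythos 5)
Your proposal is correct in substance, and it is worth noting what you were competing against: the paper's own proof of this corollary is a single line, ``Follows similarly to the proof of Theorem~\ref{thm:enlargement_of_terminal_set}.'' You have therefore supplied, in detail, exactly the argument the authors gesture at: where Theorem~\ref{thm:enlargement_of_terminal_set} works in $x$-space (convexity of \eqref{eq:MPSC_opt} makes $\mFeasibleSet$ convex, so the hull of visited states is re-solvable), the ``similar'' argument for the nominal set works in $z$-space, and yours is the natural execution of it: stored optimal nominal points satisfy the tightened constraints, each non-terminal generator is mapped to the next generator by its stored input, the terminal generator is handled by $\sigma_f$, and linearity of \eqref{eq:nominal_system} together with convexity of $\bar\XX$ and $\bar\UU$ extends this from generators to convex combinations, verifying Assumption~\ref{ass:terminal_configuration_tube_mpc}; Theorem~\ref{thm:mpsf_safe_recursive_feasibility}, plus the observation that $\mathcal X_f^{\MM(k)}\oplus\Omega$ lies inside the resulting feasible set, then yields the safety claim with backup \eqref{eq:implicit_safe_control_law}.

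Two quibbles, neither fatal at the paper's level of rigor. First, your inference that convexity of \eqref{eq:MPSC_opt} ``forces $\mathcal X_f$ to be convex'' is not airtight: convexity of the problem gives convexity of the constraint set $\mSafeSet_f\ominus\Omega = (\mathcal X_f\oplus\Omega)\ominus\Omega$, which in general only \emph{contains} $\mathcal X_f$ (a non-convex $\mathcal X_f$ can have a convex Minkowski sum with $\Omega$). You genuinely need $\mathcal X_f$ closed and convex as a standing hypothesis for the cancellation identity $(\mathcal X_f\oplus\Omega)\ominus\Omega=\mathcal X_f$ and hence for $\mZpredOpt_N\in\mathcal X_f$, which is evidently how the paper implicitly reads Assumption~\ref{ass:terminal_configuration_tube_mpc}; state it as an assumption rather than deriving it. Second, the obstacle you flag — mixed convex combinations may exit the union $\mConvexHull{\bm z^*_{\MM(k)}}\cup\mathcal X_f$ unless $\mathcal X_f\subseteq\mConvexHull{\bm z^*_{\MM(k)}}$ — is a real limitation of the corollary as literally stated, and your resolution coincides with the paper's own remark following Theorem~\ref{thm:enlargement_of_terminal_set} and with its practical procedure initializing $\mathcal X_f=\{0\}$ (there $\mZpredOpt_N=0$ is itself a generator, so the containment holds automatically). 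Your closing appeal to ``the union of two sets satisfying Assumption~\ref{ass:terminal_configuration_tube_mpc} again satisfies it'' is true (apply whichever law is defined at the point) but redundant once that containment makes the union equal the hull.
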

\begin{proof}
	Follows similarly to the proof of Theorem~\ref{thm:enlargement_of_terminal_set}.
\end{proof}

Using Theorem \ref{thm:mpsf_safe_recursive_feasibility}, we obtain a practical
procedure similar to Remark~\ref{rem:practical_design_procedure} by
initializing $\mathcal X_f = \{0\}$ and choosing $\mathcal X_f^{\MM(k)}\oplus\Omega$
as iterative terminal safe set. 


\begin{remark}
	Theorem~\ref{thm:enlargement_of_terminal_set} also provides an explicit approximation 
	of the safe set given by
	\eqref{eq:enlargement_of_terminal_set}, which is generally only implicitly defined.
	Such a representation can be used to `inform' the learning-based controller about the safety boundary,
	e.g. in the form of a feature using a barrier function, in order to avoid chattering behavior, as
	proposed in \cite{akametalu2014reachabilitySafety}.
\end{remark}


\section{Application to numerical examples}\label{sec:numerical_examples}
We consider the problem of safely acquiring information about the partially unknown
system dynamics of a discretized mass-spring-damper system, which is given by
$
	x(k+1) =
		\begin{psmallmatrix}
			1    & 0.1 \\
			-0.3 & 0.8
		\end{psmallmatrix}
	x(k) +
		\begin{psmallmatrix}
			0 \\
			0.1
		\end{psmallmatrix}
	u(k)
$
with $|u(k)|\leq 2.5$, $|x_1(k)| \leq 1$, and $x_2(k) \in [-0.4, 1]$.
Assume that an approximate model is given by
$
	x(k+1) =
	\begin{psmallmatrix}
		1 & 0.1 \\
		-0.23 & 0.78
	\end{psmallmatrix}
	x(k) +
	\begin{psmallmatrix}
		0 \\
		0.1
	\end{psmallmatrix}
	u(k)
	+ w(k)
$
with mass, spring, and damper parameters, which have a 20\% error with respect to the
true parameters. We use the results from Section~\ref{sec:calculation_omega_safe_set} in
order to calculate $\Omega$ without deriving a suitable
representation \eqref{eq:linear_system_additive_disturbance}, i.e. a suitable
$\mStateInputBound$, first. Using the approximate model and LQR design, we choose $K_\Omega=(-4.12~~-5.32)$.
Based on $N_s = 600$ uniformly sampled measurements $\{x_i,u_i,y_i\}_{i=1}^{N_s}$ from the real (but unknown)
system, we generate the robust scenario design problem \eqref{eq:scenario_based_omega_opt}
with scenarios $\tilde w_i =  y_i - \begin{psmallmatrix}
	1 & 0.1 \\
	-0.23 & 0.78
\end{psmallmatrix} x_i - \begin{psmallmatrix}
	0 \\
	0.1
\end{psmallmatrix} u_i$. 
Solving \eqref{eq:scenario_based_omega_opt} yields that
$\Omega = \{x | x^\top P x \leq 1 \}$ with
$P = \begin{psmallmatrix} 53.95 &  11.47 \\ 11.47 & 14.55\end{psmallmatrix}$
fulfills \eqref{eq:scenario_based_omega_opt_lmi} for all possible $\tilde w_i \in \mStateInputBound$
with probability $0.97$ according to Proposition \ref{prop:scenario_based_omega}.
For the MPSC scheme, we use a horizon $N=20$
and the terminal safe set $\mSafeSet_f = \Omega$ as described in Remark
\ref{rem:practical_design_procedure}.

As learning signal we use
$u_\LL(k) = 2\sin(0.01 \pi k ) + 0.5\sin(0.12 \pi k)$ with the goal of generating
informative measurements according to \cite{ljung1998system}.

A closed-loop simulation with initial condition $x(0)=(-0.7,1)^\top$ under application
of Algorithm~\ref{alg:MPSC} is illustrated in Figure \ref{fig:closed_loop_heat}
with the corresponding safe set. As desired, the safety controller modifies the
proposed input signal $u_\LL$ only as the system state approaches a neighborhood of
the safe set boundary where the next state would leave the safe set (indicated in red color). 
The pure learning-based trajectory (dotted line, in Figure \ref{fig:closed_loop_heat}), in contrast, would have
violated state constraints already in the first time steps.

Using a similiar configuration with planning horizon $N=10$, we now
iteratively enlarge the safe set based on previously calculated nominal state trajectories
at each time step by following Corollary \ref{cor:enlargement_of_terminal_set_tube}.
Samples of the nominal and overall terminal set at different time steps are shown
in Figure \ref{fig:iterative_enlargement_of_Sf}.
After $k=115$ time steps, a significant portion of the state space is already covered by the safe terminal
set.


\begin{figure}
	\centering
	\includegraphics[width=\linewidth]{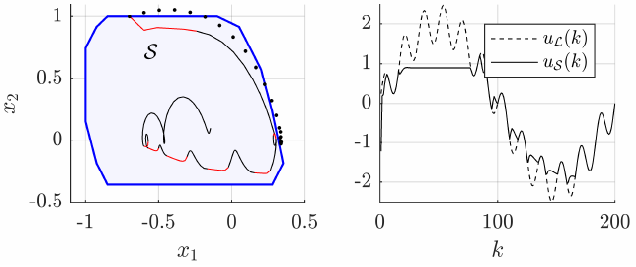}
	\caption{\textbf{Left:} Closed loop simulation under Algorithm~\ref{alg:MPSC}, starting
	from $x(0)=[-0.7,1]$.
	Red color indicates states, for which $\mNormGen{\tilde u - u_\LL}>0$. The dotted 
	line shows the first $20$ time steps of the closed-loop trajectory, resulting
	from application of $u_\LL$ without the MPSC scheme.
	\textbf{Right:}
	Learning-based control input sequence and applied controller sequence of MPSC scheme.}
	\label{fig:closed_loop_heat}
\end{figure}

\begin{figure}
	\centering
	\includegraphics[width=\linewidth]{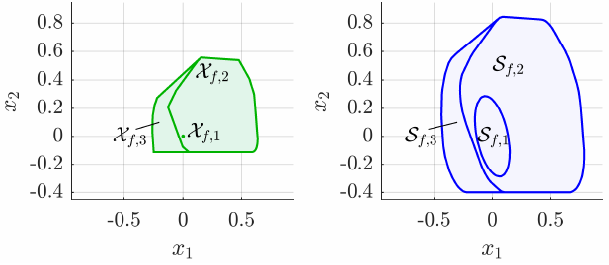}
	\caption{\textbf{Left:} Iterative enlargement of the nominal
	terminal set \eqref{eq:enlargement_of_terminal_set_tube},
	which is shown at times $k_1=0,k_2=100,k_3=115$. \textbf{Right:}
	Resulting safe \emph{terminal} set
	$\mSafeSet_f=\mathcal X_f^{\mathcal M(k)}\oplus\Omega$ corresponding
	to the nominal terminal sets at times $k_1=0,k_2=100,k_3=115$.}
	\label{fig:iterative_enlargement_of_Sf}
\end{figure}

\section{Conclusion}\label{sec:conclusion}
The paper has addressed the problem of safe learning-based control
by means of a model predictive safety certification
scheme. The proposed scheme allows for enhancing any potentially unsafe learning-based
control strategy with safety guarantees and can be combined with any
known safe set.
By relying on robust MPC methods, the presented concept is amenable for application
to large-scale systems with similar offline computational complexity as e.g.
ellipsoidal safe set approximations. Using a parameter-free
scenario-based design procedure, it was illustrated how the design steps can be performed
based on available data and how to reduce conservatism of the MPSC scheme over time by
making use of generated closed-loop data.

\bibliography{bibliography.bib}
\bibliographystyle{IEEEtran}

\end{document}